\DeclareMathOperator*{\argmax}{arg\,max}
\colorlet{MyRed}{Crimson!75!Black}
\colorlet{MyGreen}{DarkGreen!80!Black}
\colorlet{MyBlue}{MediumBlue}
\begin{document}

\begin{frontmatter}

\title{A heuristic for estimating Nash equilibria in first-price auctions with correlated values}

\author[criteo]{Benjamin Heymann}
\author[cnrs,criteo]{Panayotis Mertikopoulos}

\address[criteo]{Criteo AI Lab}
\address[cnrs]{Univ. Grenoble Alpes, CNRS, Inria, Grenoble INP, LIG, 38000, Grenoble, France}

\begin{abstract}
Our paper concerns the computation of Nash equilibria of first-price auctions with correlated values.
While there exist several equilibrium computation methods for auctions with \textit{independent} values, the correlation of the bidders' values introduces significant complications that render existing methods unsatisfactory in practice.
Our contribution is a step towards filling this gap:
inspired by the seminal fictitious play process of Brown and Robinson, we present a learning heuristic -- that we call \textit{fictitious bidding} (FB) -- for estimating Bayes-Nash equilibria of first-price auctions with correlated values, and we assess the performance of this heuristic on several relevant examples.

\end{abstract}

\end{frontmatter}


\section{Introduction}

In single-item sealed bid first-price auctions (sometimes referred to as \textit{pay-as-bid}), several potential buyers bid simultaneously for an item to be bought;
the highest bidder then  gets the item and pays their bid to the seller.
A common approach to model a sealed bid first-price auction is to frame it as a game. 
More precisely, since the buyers’ valuations are not necessarily known by the other participants, 
one usually supposes that each of those valuations are random variables sampled from a probability distribution.  In this narrative,  each buyer observes their own valuation \textendash\ a \emph{private} signal \textendash\ before submitting their bid, so that the resulting situation belongs to the class of \textit{Bayesian} games.
A good share of the literature on auctions makes the additional structuring assumptions that the valuations are independently distributed.

Given the probability distributions of the  buyers' values, an important research question is to determine the outcome of the auction.
While several solution concepts exist in the game-theoretic literature, the notion of Bayes-Nash equilibrium is of primary theoretical and practical importance for the study of  auctions, and the numerical estimation of   Bayes-Nash equilibrium has been a vigorously researched question for quite some time, with several breakthroughs along the way.
However, a major challenge that arises in practice is that
(\textit{a}) the above methods invariably rely on a first-order characterization of the solution
and
(\textit{b}) they require the bidders' values to be independently distributed, which is a very strong limitation for real applications.

\subsection*{Our contributions}
Our paper seeks to overcome this limitation via a learning approach whereby the bidders' interactions are stylized as a repeated game, and the auction's Bayes-Nash equilibrium emerges as the players' empirical bid distribution profile over time.
This heuristic \textendash\ both simple and natural \textendash\ specifies the bidders' behavior and simulates the dynamics of the bidders' strategy profiles as the auction is repeated, and shares some features with the seminal fictitious play procedure introduced by Brown and Robinson as a solution method for zero-sum games. This is why we name it fictitious bidding (FB).

Given the immense difficulty of calculating Bayes-Nash equilibria in first-price auctions -- a problem which is PP-complete, and may even be PSPACE-complete \cite{CP14} -- any theoretical guarantees of our method would necessarily concern smaller, more tractable classes of first-price auctions.
To avoid narrowing down our focus in this way, we forego a theoretical analysis of the proposed heuristic, and we undertake instead a numerical evaluation campaign in a series of diverse, relevant examples.
Also, given that our heuristic does not rely on the structure of the first-price auction, it is also tested successfully on other auction rules.
In all the examples we tested, the method provides an equilibrium outcome in a few seconds or minutes, depending on the example, a feature which we consider particularly appealing for practical applications.

To the best of our knowledge, the proposed heuristic is the first approach in the literature that remains agnostic to any correlations or dependencies between the users' value distributions.
Moreover, the adopted learning viewpoint offers the distinct advantage of being oblivious to the actual auction mechanism (and any first-order characterization of the auction's equilibrium), so the proposed heuristic can also be applied directly to other auction settings -- such as mixtures of first and second-price auctions or other combinations thereof.

\subsection*{Related work}

The literature on auctions is immense~\cite{krishna2009auction}, and is impossible to survey in a short paper;
we therefore discuss below only the most relevant works we are aware of concerning the numerical computation of Bayes-Nash equilibria in first-price auctions.

In his seminal paper~\cite{vickrey1961counterspeculation}, Vickrey derived a characterization of Nash equilibria when the bidders' values are independent and the bidders otherwise identical.
Later, Plum~\cite{plum1992characterization} showed how to compute the Nash equilibrium of $2$-bidder auctions for some special cases when the price is a combination of first and second price.
A first general numerical method for computing the Nash equilibrium of a first-price auction with independent values appears in~\cite{marshall1994numerical}.
Theoretical analyses of the equilibrium structure are provided in particular in~\cite{10.2307/2648842,maskin2003uniqueness,reny2004existence}.
In order to study bidding rings, Bajari introduces several heuristics to compute the Nash equilibrium of first-price auctions~\cite{bajari2001comparing}.
Several other computation methods for first-price auctions with independent values have been proposed since then~\cite{Gayle2008,fibich2011numerical,kaplan2012asymmetric,Fibich2012,hubbard2014numerical}, and more recently,~\cite{10.5555/3398761.3398929}, that addresses the cases of discrete value distributions.
These methods rely on a first-order characterization of the best reply of the players to produce a system of ordinary differential equations that are then solved using various methods. One of the main difficulty lies in the numerical instability of the solution to the system.

Research on first-price auctions received renewed impetus in 2019 when Google switched its display advertising market place to first-price~\cite{heymann2020bid,paes2020competitive}.
This led to a surge of interest in new topics such as 
computational complexity~\cite{filos2021complexity},
numerical approximation~\cite{rasooly2021importance}
or
more recently, the use of neural networks to compute the auction's equilibrium \cite{bichler2021learning}.

We also mention that there is a very active track of research that, in the of wake of~\cite{myerson1981optimal}, aims at maximizing the seller's revenue with pragmatism~\cite{hartline2009simple,dhangwatnotai2015revenue,cole2014sample}.

Our paper takes a complementary, online learning approach to the above:
the bidders' interactions are modeled as a repeated game that unfolds in discrete time, with each agent playing an approximate "best response" to the observed empirical distribution of bids.
This heuristic closely resembles -- and was inspired by -- the fictitious play process as pioneered concurrently by Brown \cite{Bro51} and Robinson \cite{Rob51}.
This is one of the most widely studied procedures for learning in games, and it involves each player playing a best response to her believs about her opponents, given here by the the empirical frequency of past play.

The convergence of these beliefs to Nash equilibrium was first established in two-player zero-sum games by Robinson \cite{Rob51}.
Subsequently, the method has been shown to converge in
$2\times2$ games \citep{Miy61},
general $N$-player potential games \citep{MS96-jet},
symmetric games with an interior evolutionarily stable strategy \citep{Hof95b},
as well as certain classes of supermodular games \citep{MR90,MR91,Kri92,Hah08}.
Variants of fictitious play involving a certain degree of explicit exploration / randomness have also been considered in the literature:
the most widely studied of these processes is that of \textit{stochastic} -- or \textit{perturbed} -- fictitious play, which was introduced by Fudenberg \& Kreps \citep{FK93}, and which was shown by Hofbauer \& Sandholm \citep{HS09} to converge to an approximate Nash equilibrium -- a quantal response equilibrium to be exact -- in the same classes of games as fictitious play.%
\footnote{Stochastic fictitious play is related -- \textit{but not in any way equivalent} -- to the class of no-regret learning policies known as "follow the regularized leader" \citep{SS11};
for a detailed discussion, we refer the reader to \cite{MS16,HLMS21a}, and references therein.}

At the same time, the literature on the convergence of (stochastic) fictitious play should not be interpreted as suggestion that these processes converge to equilibrium in \textit{all} games:
notable examples include Jordan's three-player matching pennies variant \citep{Jor93}, as well as the counterexamples of Shapley \citep{Sha64} and Gaunersdorfer and Hofbauer \citep{GH95}.
In a similar vein, we should stress that we do not make any claims of global convergence to Bayes-Nash equilibrium in \textit{all} first-price auctions.
However, the series of numerical examples presented is sufficiently wide in scope and breadth to provide reasonable optimism for the use of the proposed heuristic in practice.

\section{Formulation of the game, and agent-based reformulation}
The choice of mathematical formalism to model the bidders' behavior (their \emph{strategies}) plays a crucial role in the analysis. 
For instance, when the support of the bidders' values is an interval, it is well known that, under mild assumptions, there exists a unique pure Nash equilibrium~\cite{lebrun1999first,athey2001single,maskin2003uniqueness}, so randomization is not needed for the analysis.
By contrast, if the bidders' values belong to a discrete set, bid randomization is required to ensure the existence of a Nash equilibrium~\cite{wang2020bayesian}.
In both cases --- discrete or continuous support ---  most of the literature captures the bidders' strategies via their cumulative bid distributions.
This is natural because, in both cases, the bid cannot decrease when the value increases.

However, this formalism is not well-adapted to our algorithmic proposal.
For this reason, after providing a quick refresher on the standard first-price auction model, we present an alternative, agent-based formulation which is much better suited to the discussion of our learning heuristic.
\subsection{The standard approach to model an auction as a game}

  A standard way to model one-item auctions is as follows:  The auction is framed as a Bayesian game with players (the auction participants) belonging to a finite set $\PLAYER$. 
    Each  player $\player\in\PLAYER$ has access to a (random) private signal $\val_\player\in \RR^+$ that corresponds to their value for the item being auctioned, and places a \textit{bid} $\bid_\player=\beta_\player(\val_\player)$.
    The auction is then cleared\footnote{In case of equality of the highest bids, a tie breaking rule is needed.} and 
     the payoof of the $\player$-th player in a first-price auction is given by  $(\val_\player - \bid_\player) \prod_{j\in\PLAYER;j\neq \player}\{\bid_\player>\bid_j\}$.
     
    Formally, a one-item auction is a tuple $(\PLAYER,(\valueSet_\player)_{\player\in\PLAYER},F)$ such that
    \begin{enumerate}
        \item $\PLAYER$ is a finite set;
        \item Each $\valueSet_\player$ is a compact subset of $\RR^+$;
        \item $F$ is a distribution on $\prod_{\player\in\PLAYER}\valueSet_\player$.
    \end{enumerate}
    For such an auction, a strategy  for player $\player\in\PLAYER$ is a measurable map $\beta_\player\colon\valueSet_\player\to\RR^+$.
    We denote by $\Sigma_\player$ the set of  strategies of player $\player$ so
    the expected payoff of player $\player$
    given a strategy profile $(\beta_\player)_{\player\in\PLAYER}$ is 
    \begin{equation}
    \pi_\player(\beta_\player,\beta_{-\player}) = \EE_F \big[ (v_\player - \beta_\player(\val_\player)) \prod_{j\neq \player}\{\beta_\player(\val_\player)>\beta_j(\val_j)\} \big],
    \end{equation}
    where we have used the standard game-theoretic shortand $-\player=\PLAYER\setminus\{\player\}$. A (pure strategy) Nash Equilibrium (NE) is then defined as a strategy profile $\beta_\PLAYER$ such that
 \begin{equation}
        \pi_\player(\beta_\player,\beta_{-\player}) \geq\pi_\player(\beta'_\player,\beta_{-\player}) \quad\forall \beta_\player\in\Sigma_\player, \quad \forall \player\in\PLAYER.
 \end{equation}
 In the remaining of this paper, we will refer to this formulation as the \emph{player-based} formulation.
 
 As we mentioned above, when the value sets $(\valueSet_\player)_{\player\in\PLAYER}$ are intervals, there exists under mild assumptions a (unique) Nash equilibrium in pure strategies, so there is no need to introduce mixed strategies. 
 However, such mixed strategies are required in a discrete setting where the existence of a Nash equilibrium is only guaranteed in \textit{mixed} strategies.
 We will therefore introduce mixed strategies only for the discrete setting because, as observed by Aumann~\cite{aummanmixed}, the definition of mixed strategies when pure strategies are continuous mappings between two continuous sets introduces significant technical difficulties that are beyond the scope of this work.%
 \footnote{The problem is that one cannot write in this case $\Delta\Sigma_\player$.}
 We hence call a mixed strategy a probability distribution on $\Sigma_\player$, and denote by  $\Delta\Sigma_\player$ the set of mixed strategies of player $\player$.

\subsection{The agent-based auction model}
\label{subsec-std-formulation}

We propose a variant to  the  \textit{player-based}  auction model introduced in Section~\ref{subsec-std-formulation}.
An agent-based auction model is a tuple $(\AGENT,\val_\AGENT,\probascenario)$ such that 
\begin{enumerate}
    \item $\AGENT$ is a finite set of agents;
    \item $\val_\AGENT$ is  a vector of values (assigning a single value per agent).
   \item  $\probascenario$ is  a probability distribution on 
$2^\AGENT$.
\end{enumerate}
The strategy set $\STRATEGY$ is defined to be the set of probability distributions on $\RR^+$, so a strategy profile $(\strategy_\agent)_{\agent\in\AGENT}\in \STRATEGY^\AGENT$, denoted $\strategy_\AGENT$ induces a probability distribution on $(\RR^+)^\AGENT$.
An $\epsilon$-Nash equilibrium in this setting is a strategy profile
$\strategy_\AGENT^\star$ such that for all $\agent\in\AGENT$ and $\strategy_\AGENT\in \STRATEGY^\AGENT  $
\begin{equation}
    \agentprofit(\strategy^\star_\agent,\strategy_{-\agent}^\star)\geq
       \agentprofit(\strategy_\agent,\strategy_{-\agent}^\star)-\epsilon,
\end{equation}
where, for a first-price auction,
\begin{equation}
\agentprofit(\strategy_\agent,\strategy_{-\agent}) = \int (\value_\agent-\bid_\agent)\prod_{\agent'\in\scenario\setminus \agent}\{\bid_\agent>\bid_{\agent'}\}
\dd\mu(S|\agent\in S)\dd\strategy_{\AGENT}(\bid_\AGENT).
\end{equation}
\subsection{Equivalence of the two formulations}

The relation between the two formulation might not be a priori obvious.
In view of this, we show below that the agent-based formulation is equivalent to the player-based one when the underlying set of values is discrete.

\begin{lemma}
Let $(\PLAYER,I_\PLAYER,F)$ be a player-based one-item auction formulation. Suppose that $I_\player$ is discrete for all $\player\in\PLAYER$.
Then there exist
\begin{enumerate}
    \item An agent-based auction model $(\AGENT,\val_\AGENT,\probascenario)$
    \item A partition $(\AGENT_\player)_{\player\in\PLAYER}$ of $\AGENT$ 
    \item A family of application  $(\phi_\player)_{\player\in\PLAYER}$, where  $\phi_\player$ is a bijection from the player's mixed strategy set   $\Delta\Sigma_\player$ to its agents' strategy set $\Gamma^{\AGENT_\player}$.
\end{enumerate}
such that for any player strategy profile $\beta_\PLAYER\in\prod_{\player\in\PLAYER}\Sigma_\player$, we have 
\begin{equation}
    \pi_{\player}(\beta_\player) =\sum_{\agent\in\AGENT_\player} \pi_\agent((\phi_\player(\beta_\player))_{\player\in\PLAYER})\mu(S\ni\agent)
\end{equation}

\end{lemma}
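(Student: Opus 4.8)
The plan is to build the agent-based model by \emph{splitting each player into one agent per value}. Since every $\valueSet_\player$ is discrete, I set $\AGENT_\player = \{(\player,\value) : \value\in\valueSet_\player\}$, let $\AGENT = \bigsqcup_{\player\in\PLAYER}\AGENT_\player$, and assign to the agent $(\player,\value)$ the value $\value$. The scenario distribution is then supported on the \emph{transversal} subsets $\scenario\subseteq\AGENT$ that contain exactly one agent from each block $\AGENT_\player$; identifying such a transversal with the value profile $(\value_\player)_{\player\in\PLAYER}$ it selects, I put $\mu(\scenario)=F\big((\value_\player)_{\player\in\PLAYER}\big)$ and $\mu=0$ on every non-transversal set. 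By construction $\mu(\scenario\ni(\player,\value))$ equals the $F$-marginal probability that player $\player$ has value $\value$, while $\mu(\cdot\mid(\player,\value)\in\scenario)$ is exactly the conditional law of the remaining values given $\val_\player=\value$; this is the bookkeeping property that will drive the payoff identity.

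Next I would define $\phi_\player$ by \emph{marginalization}: to a mixed strategy of player $\player$ associate, for each $\value\in\valueSet_\player$, the law of the bid $\beta_\player(\value)$ it prescribes, i.e.\ the push-forward of the mixed strategy under the evaluation map $\beta_\player\mapsto\beta_\player(\value)$. This assigns to agent $(\player,\value)$ a distribution in $\STRATEGY$, hence an element of $\STRATEGY^{\AGENT_\player}$. On a pure strategy $\beta_\player\in\Sigma_\player$ this is simply $(\delta_{\beta_\player(\value)})_{\value\in\valueSet_\player}$, so each agent bids deterministically.

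The core of the argument is then a direct computation of the right-hand side on a pure profile $\beta_\PLAYER$. Fixing an agent $(\player,\value)$, the factor $\mu(\scenario\ni(\player,\value))$ times the conditional integral $\agentprofit$ taken against $\mu(\cdot\mid(\player,\value)\in\scenario)$ collapses, via the chain rule $\mu(\cdot\mid(\player,\value)\in\scenario)\,\mu(\scenario\ni(\player,\value))=\mu(\cdot\ \text{and}\ (\player,\value)\in\scenario)$, into a sum over value profiles with $\val_\player=\value$ weighted by the full joint $F$; summing over $(\player,\value)\in\AGENT_\player$ (equivalently over $\value\in\valueSet_\player$) reassembles the full $F$-expectation and reproduces $\pi_\player(\beta_\player,\beta_{-\player})$ verbatim. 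The identity then extends to mixed strategies by linearity of both sides in each player's randomization.

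The step I expect to be most delicate is not this computation but the claimed \emph{bijectivity} of $\phi_\player$. Marginalization from $\Delta\Sigma_\player$ --- joint laws on $(\RR^+)^{\valueSet_\player}$ --- onto $\STRATEGY^{\AGENT_\player}$ --- tuples of marginals --- is surjective but forgets any coupling a mixed strategy may impose between the bids at distinct values. I would resolve this using the structural fact that in a first-price auction only the bid at the \emph{realized} value enters the payoff, so two mixed strategies with identical per-value marginals are payoff-equivalent; $\phi_\player$ is therefore a bijection once $\Delta\Sigma_\player$ is read modulo this payoff-equivalence, with canonical inverse sending a marginal profile to the independent-randomization mixed strategy. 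This is the only sense in which the bijection can hold, and the only sense in which the lemma uses it.
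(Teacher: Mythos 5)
Your construction coincides with the paper's: one agent per (player, value) pair, a scenario distribution supported on the transversal sets with $\mu(S)=F(v_S)$ and zero elsewhere, and $\phi_\player$ given by per-value marginalization --- exactly what the paper's one-line proof calls ``desintegrating'' the mixed strategies over the agents. The paper omits both the payoff computation and any discussion of injectivity, so your observation that marginalization is a bijection only modulo payoff-equivalence (with the independent-randomization coupling as canonical inverse) sharpens a point the paper glosses over, rather than departing from its approach.
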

\begin{proof}
Set $\AGENT_\player=[|I_\player|]$, $v_{\AGENT_\player}= I_\player$ for all $\player\in\PLAYER$, and
$\forall S\in 2^\AGENT,\quad \mu(S)=0\quad \mbox{if}\ S\not\in\times_{\player\in\PLAYER}\AGENT_\player$ and $\mu(S)=F(v_{S})$ otherwise.
Then $\phi_\player$ is obtained by desintegrating the players' mixed strategies over the set of agents.
\end{proof}

\section{Fictitious bidding: definition, implementation, and results}
\subsection{Definition of the heuristic}

As we discussed in the introduced, the proposed heuristic for calculating the Nash equilibrium of an auction is inspired by the process of fictitious play for learning in finite games \citep{Bro51,Rob51}.
The key novelty of our approach --- which leads to an easily implementable heuristic --- is that the process unfolds in the \textit{agent-based} representation of the game, not its normal form (or a discretized version therefof).

To describe the heuristic formally, let $\DISCRETEBIDS$ be a discrete grid of bid values, let $\eta_k>0$ be a "learning rate" sequence, and let $(\gamma^0_{\agent}(\bid))_{\agent\in[\Nagent],\bid \in \DISCRETEBIDS}$ be an initial strategy profile. 
Then the proposed heuristic proceeds by taking at each stage $k=0,1,\dotsc$, a best reply to a generalized exponential moving average of the players' empirical bid distributions up to stage $k$.
More precisely, we have the following sequence of events:
\begin{enumerate}
    \item Strategies are initialized with $(\gamma^0_{\agent}(\bid))_{\agent\in[\Nagent],\bid \in \DISCRETEBIDS}$
    \item At each time step $k$, every agent $\agent\in\AGENT$ performs the bid update
    \begin{equation}
 \bid_\agent^{(k)} \in \argmax_{\bid \in \DISCRETEBIDS}  \agentprofit(\delta_\bid,\strategy_{-\agent}^{(k)})
    \end{equation}
    \item Subsequently, mixed strategies are updated as
    \begin{equation}
        \strategy_{\agent}^{(k+1)} = (1-\eta_k)\strategy_{\agent}^{(k)}+\eta_k\delta_{ \bid_\agent^{(k)}}
    \end{equation}
    \item Repeat until a termination criterion is triggered
\end{enumerate}

At a high level, this process is conceptually similar to fictitious play as, at each stage, the algorithm outputs a best response to "some" average of past play.
This analogy would be precise if the averaging assigned equal weight to all past states, which in turn corresponds to the choice $\eta_k = 1/k$ for the method's learning rate.
However, in many cases, it is more advantageous to assign exponentially more weight to recent observations relative to past ones, as this would effectively discount the irrelevant impact of the method's (arbitrary) initialization;
this corresponds to the choice $\eta_k = \text{const.}$, in which case the averaging undertaken is a straightforward exponential moving average.

\subsection{Implementation and results}

In terms of implementation, the fictitious bidding heuristic can be coded in less than $100$ lines in Julia~\cite{bezanson2017julia} and is available online.%
\footnote{we will release the code with the definitive version of the paper} All our experiments were run with this implementation on a personal laptop.
For the case of independent probability distributions we can use the simplifying observation  that  $\gamma_{-a}^{(k)}$  is the same for every agent of the same player, which in turn alleviates the computational and memory burden. In the plots we provide, we report the cumulative distribution of the agent's bid as well as their payoff, which allows for checking visually that the support of the bids is approximatelly in the maximand of the payoff.

\begin{example}
\label{exemple-0-1}
We take
$|\AGENT|=4$, $\val_{\AGENT}=(0,0,1.,1.)$, and
$\mu$ take the value 0.25 on the following 4 scenarios:
$(\agent_1,\agent_2)$,
$(\agent_3,\agent_4)$,
$(\agent_1,\agent_4)$,
$(\agent_2,\agent_3)$.
This corresponds to a symmetric setting with two bidders whose values are sampled in $\{0,1\}$ uniformly and independently.
This example is interesting because it posses an analytic solution.
Indeed, suppose that in the equilibrium, no agent bids above its value, then the bid of agent $\agent_1$ and agent $\agent_2$ should be zero.
Since the setting is symmetric, let us denote by $G$ the cumulative of the bid of agent $\agent_3$ and $\agent_4$.
Observe that \(G(b) = 1/(1-b) - 1 \) for \(b \in[0,1/2]\) corresponds to a Nash equilibrium because (1) the expected payoff of agent $\agent_3$ (resp. $\agent_4$) is constant, equal to 0.5, for \(b \in[0,1/2]\) and strictly smaller  than 0.5  for any \(b >1/2\).
We took $\BB=[0,1/400,2/400,\ldots 1]$, $\eta_k =0.01/k$ and performed $10^5$ iterations.
The strategy estimate we obtain is  $8\times 10^{-5}$-Nash equilibrium.
The result is displayed in Figure~\ref{fig:example0-1}.
\begin{figure}[h]
\includegraphics[width=8cm]{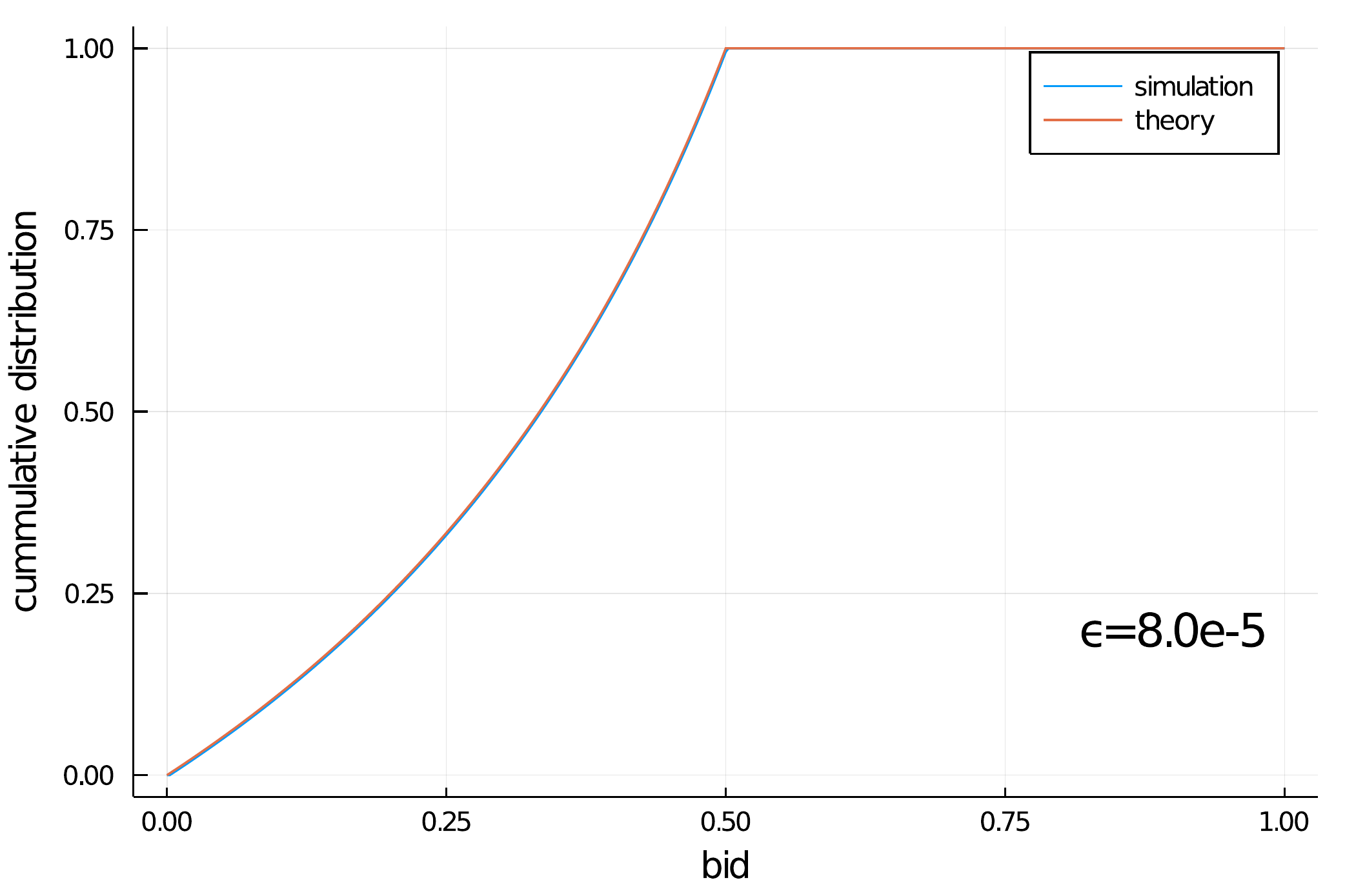}
\caption{The bid cumulative of agent 3 (of value 1)  in Example~\ref{exemple-0-1}}
\label{fig:example0-1}
\end{figure}
\end{example}

\begin{example}
\label{exemple_correlated1}
Next we take $|\AGENT|=3$ and $\value_\AGENT=(1/3,2/3,3/3)$ with the two equiprobable scenarios $(\agent_1, \agent_2)$ and $(\agent_2, \agent_3)$.
We took $\BB=[0,1/600,2/600,\ldots 1]$, $\eta_k =0.01/k$ and performed $10^6$ iterations.
The strategy profile  we obtain is a $(1.5\times 10^{-4})$-Nash equilibrium.
Results are displayed in Figure~\ref{fig:corr1}.
\end{example}
\begin{figure}
     \centering
     \begin{subfigure}[b]{0.3\textwidth}
         \centering
         \includegraphics[width=\textwidth]{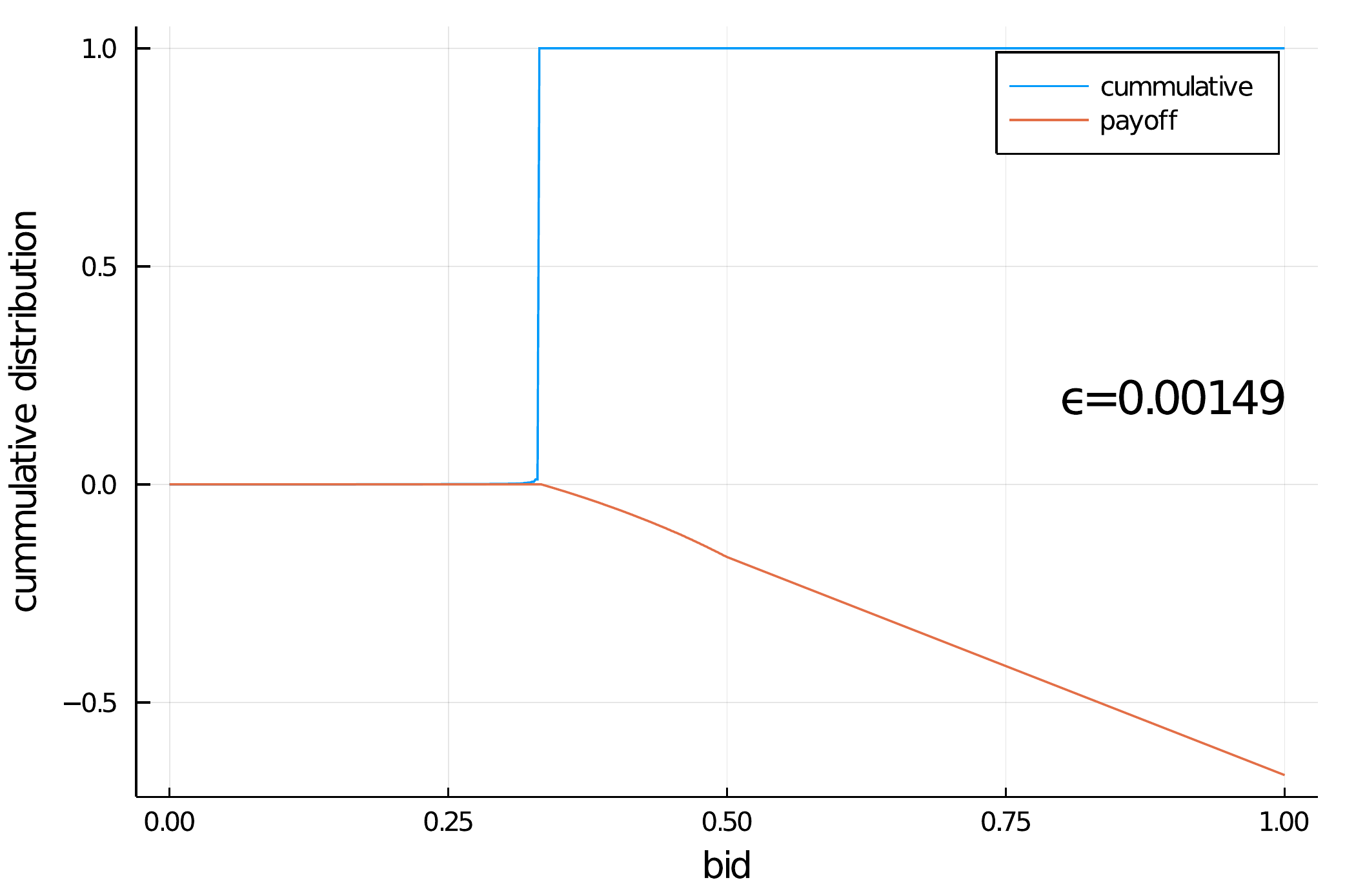}
         \caption{ Agent 1 (of value 1/3)}
         \label{fig:y equals x}
     \end{subfigure}
     \hfill
     \begin{subfigure}[b]{0.3\textwidth}
         \centering
         \includegraphics[width=\textwidth]{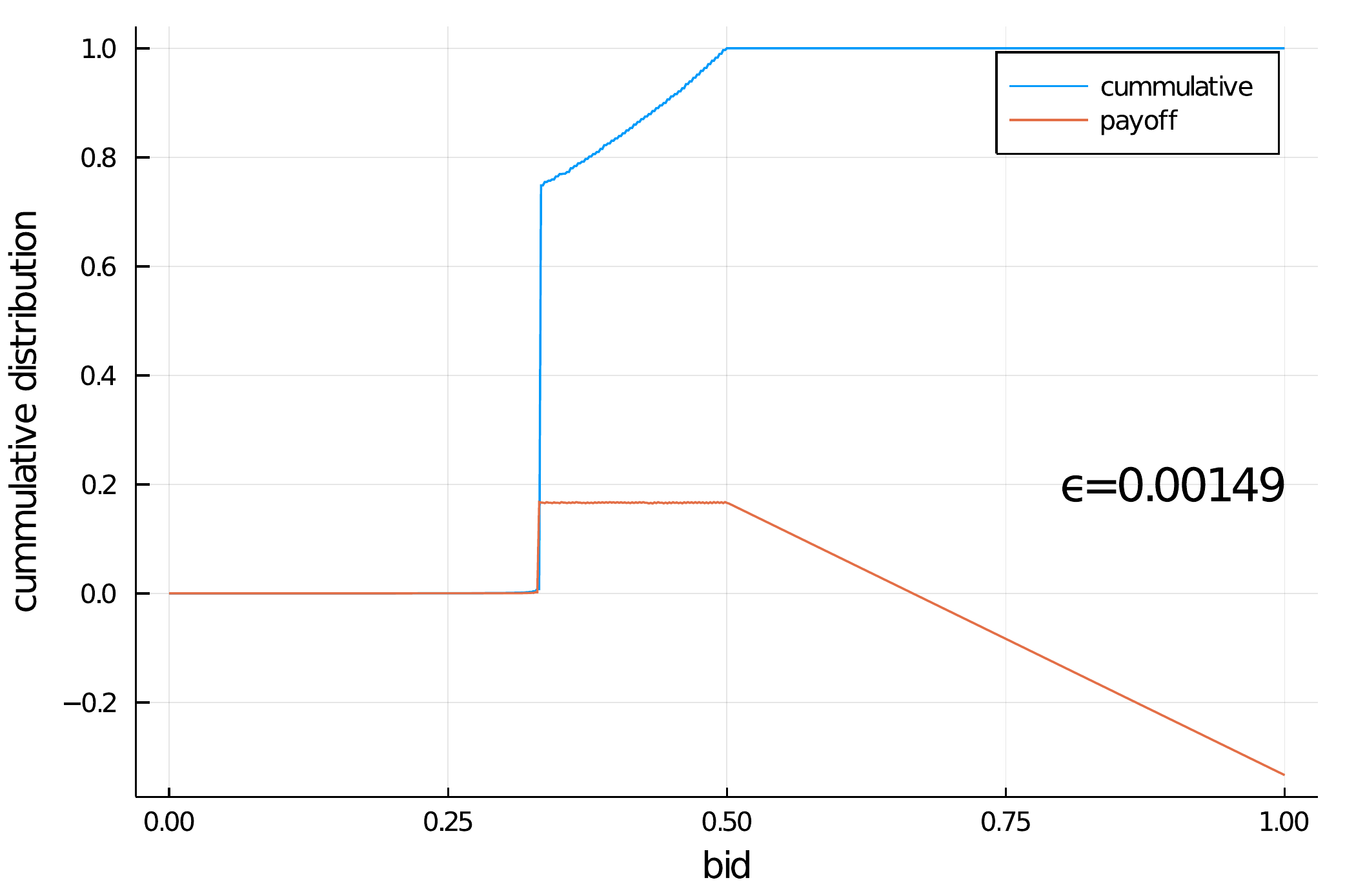}
         \caption{Agent 2 (of value 2/3)}
         \label{fig:three sin x}
     \end{subfigure}
     \hfill
     \begin{subfigure}[b]{0.3\textwidth}
         \centering
         \includegraphics[width=\textwidth]{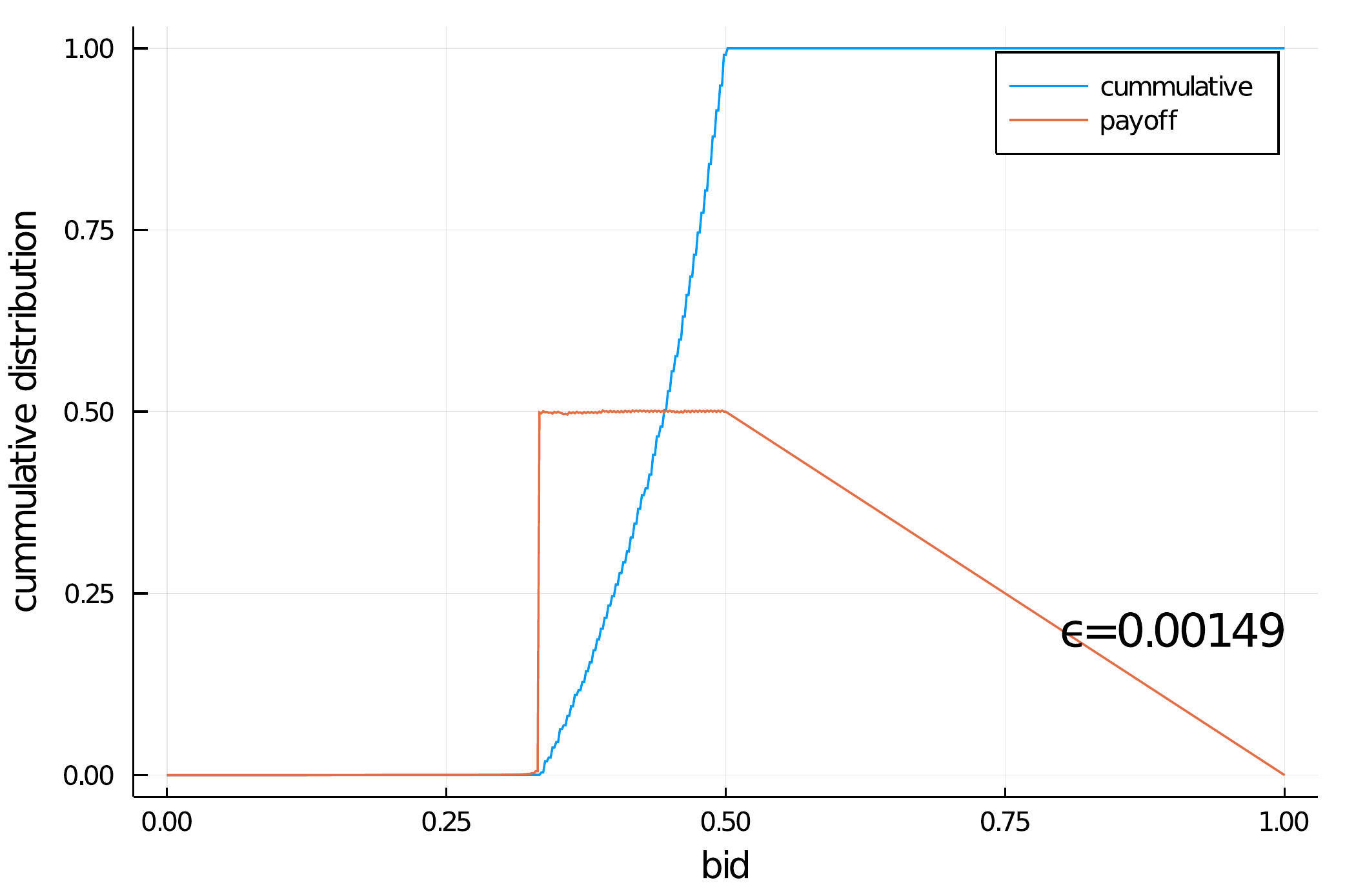}
         \caption{Agent 3 (of value 3/3)}
         \label{fig:five over x}
     \end{subfigure}
        \caption{Results for Example~\ref{exemple_correlated1}}
        \label{fig:corr1}
\end{figure}

\begin{example}
\label{exemple_correlated2}

Let  $\val_\AGENT=(1/4,2/4,2/4,4/4)$ with the four equiprobable scenarii
\((\agent_1,\agent_2),(\agent_2,\agent_3),(\agent_1,\agent_3),(\agent_1,\agent_2,\agent_3,\agent_4) \).
We took $\BB=[0,1/400,2/400,\ldots 1]$, $\eta_k =0.01/k$ and performed $10^5$ iterations.
The strategy estimate we obtain is a $(2.5\times 10^{-3})$-Nash equilibrium.
The results are displayed in Figure~\ref{fig:examplecorr2}
.\end{example}

\begin{figure}
     \centering
     \begin{subfigure}[b]{0.3\textwidth}
         \centering
         \includegraphics[width=\textwidth]{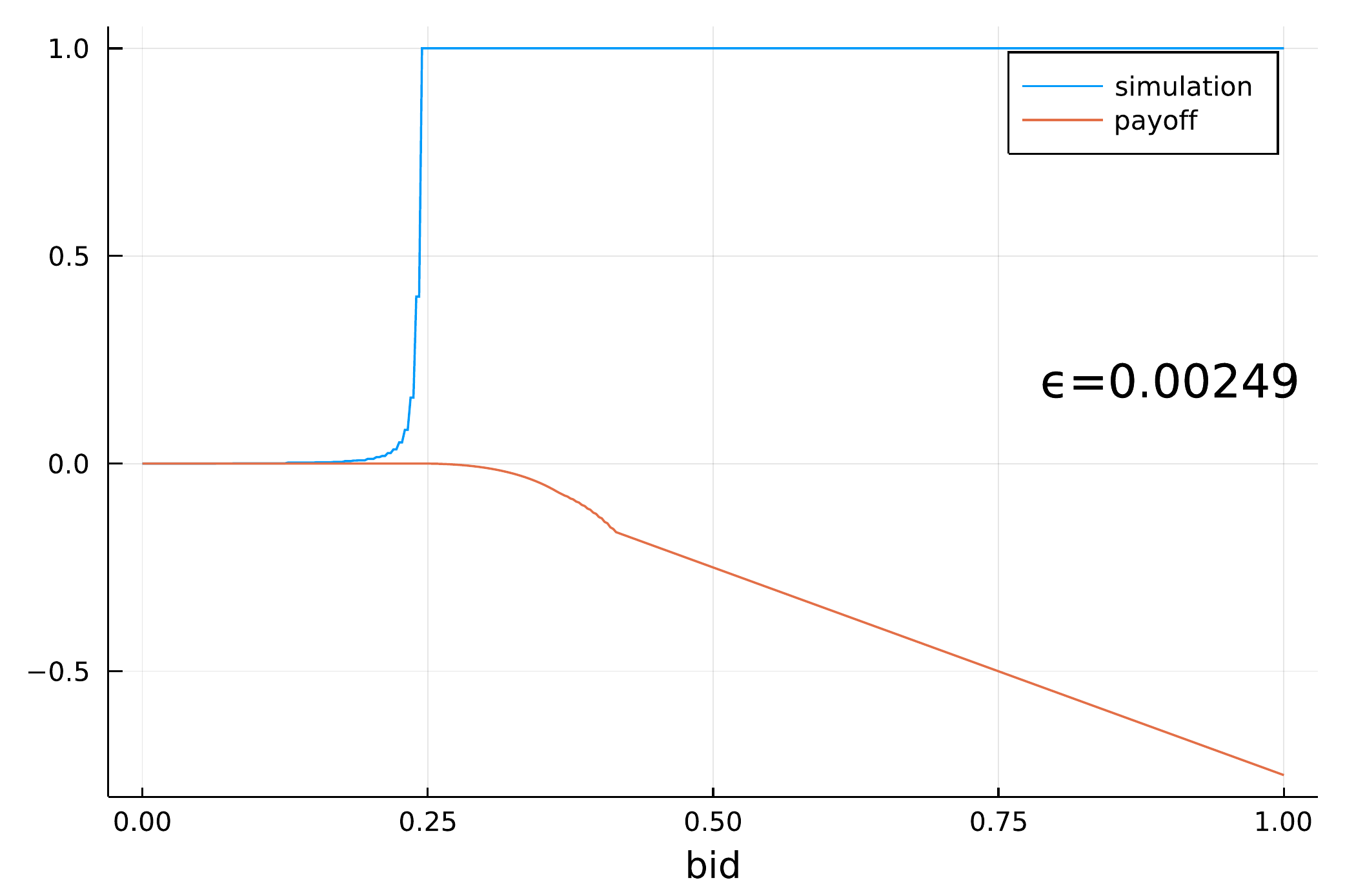}
         \caption{Agent 1 value 1/4}
         \label{fig:y equals x}
     \end{subfigure}
     \hfill
     \begin{subfigure}[b]{0.3\textwidth}
         \centering
         \includegraphics[width=\textwidth]{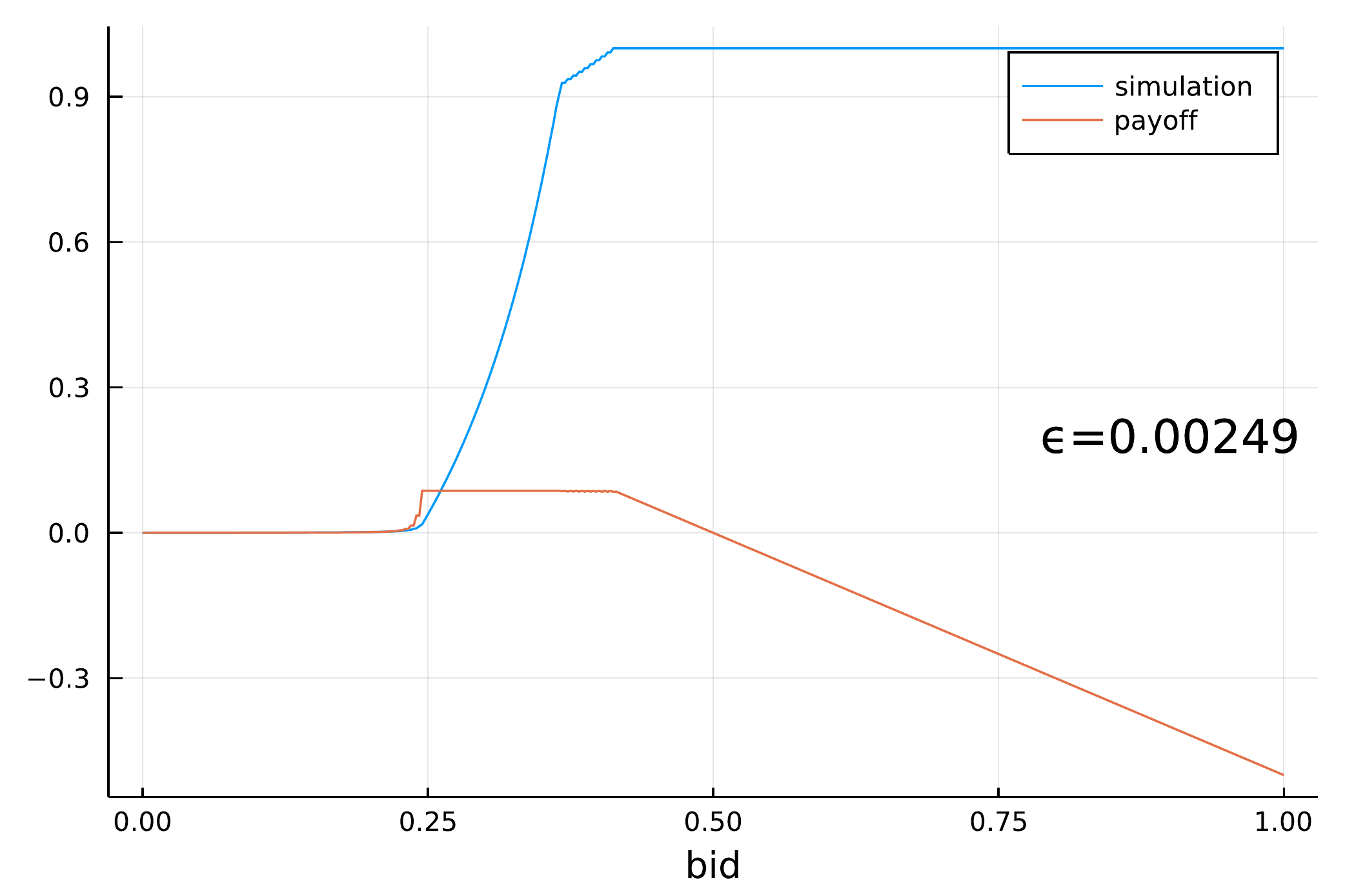}
         \caption{Agent 2 value 2/4}
         \label{fig:three sin x}
     \end{subfigure}
     \hfill
     \begin{subfigure}[b]{0.3\textwidth}
         \centering
         \includegraphics[width=\textwidth]{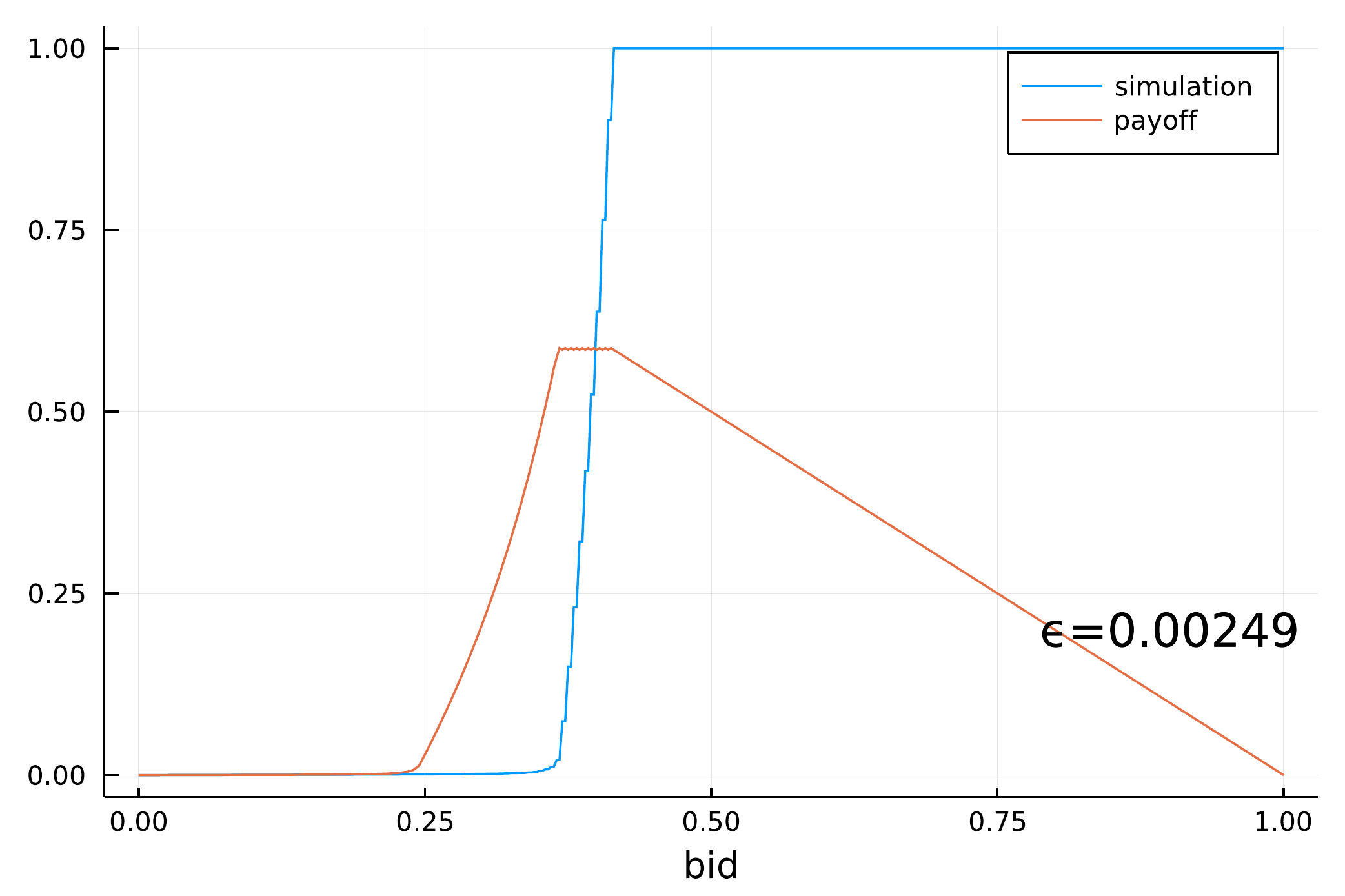}
         \caption{Agent 4  value 4/4}
         \label{fig:five over x}
     \end{subfigure}
        \caption{Example~\ref{exemple_correlated2}}
        \label{fig:examplecorr2}
\end{figure}

\begin{example}[From Wang et al.]
\label{example:wang}
We test the heuristic on Example 8 from~\cite{wang2020bayesian}, which was used to test the stability of continuous method on discrete setting. It is notable that we do  better than the result reported by~\cite{wang2020bayesian} for the continuous methods.
We use the library developed in~\cite{wang2020bayesian} to compare our heuristic with their solution.
The setting is as follow.
There are $3$ players, each represented by $3$ agents, so there are $9$ agents in total.
The potential values of all players belong to $\{0.1,0.2,0.25\}$, and are sampled independently.  What differs between the players is the relative probability of those three values:
\begin{itemize}
    \item For the first player those probabilities are 1/4, 1/4, and 1/2
    \item For the second and third player, they are  0.05, 0.45, and 0.5.
\end{itemize}
The results are reported in Figure~\ref{fig:wang}, and the strategy profile we obtain is a $(4\times 10^{-5})$-Nash equilibrium of the auction.
\end{example}

\begin{figure}
     \centering
     \begin{subfigure}[b]{0.5\textwidth}
         \centering
         \includegraphics[width=\textwidth]{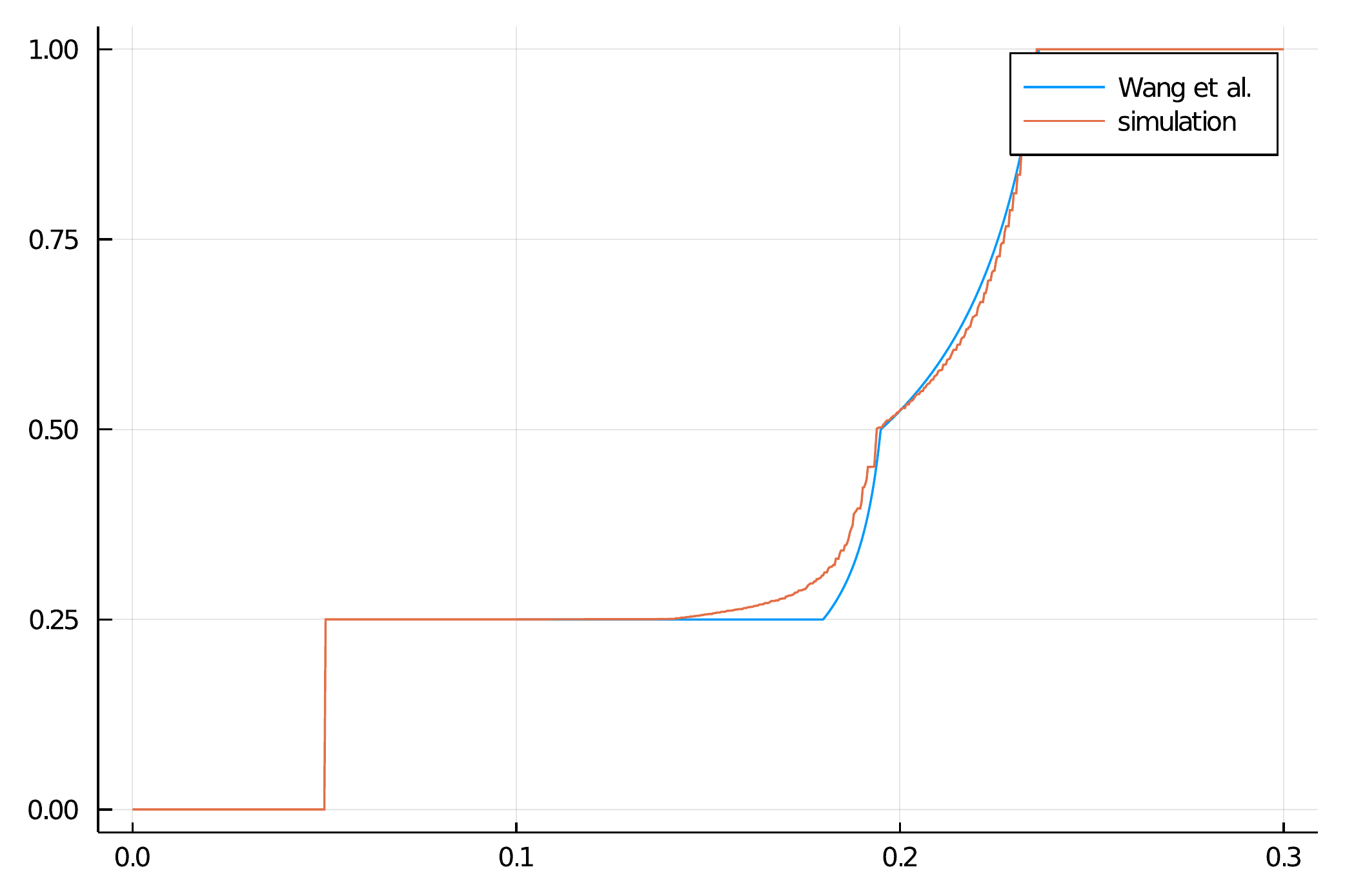}
         \caption{Strategy of player 1}
         \label{fig:y equals x}
     \end{subfigure}
     \hfill
     \begin{subfigure}[b]{0.4\textwidth}
         \centering
         \includegraphics[width=\textwidth]{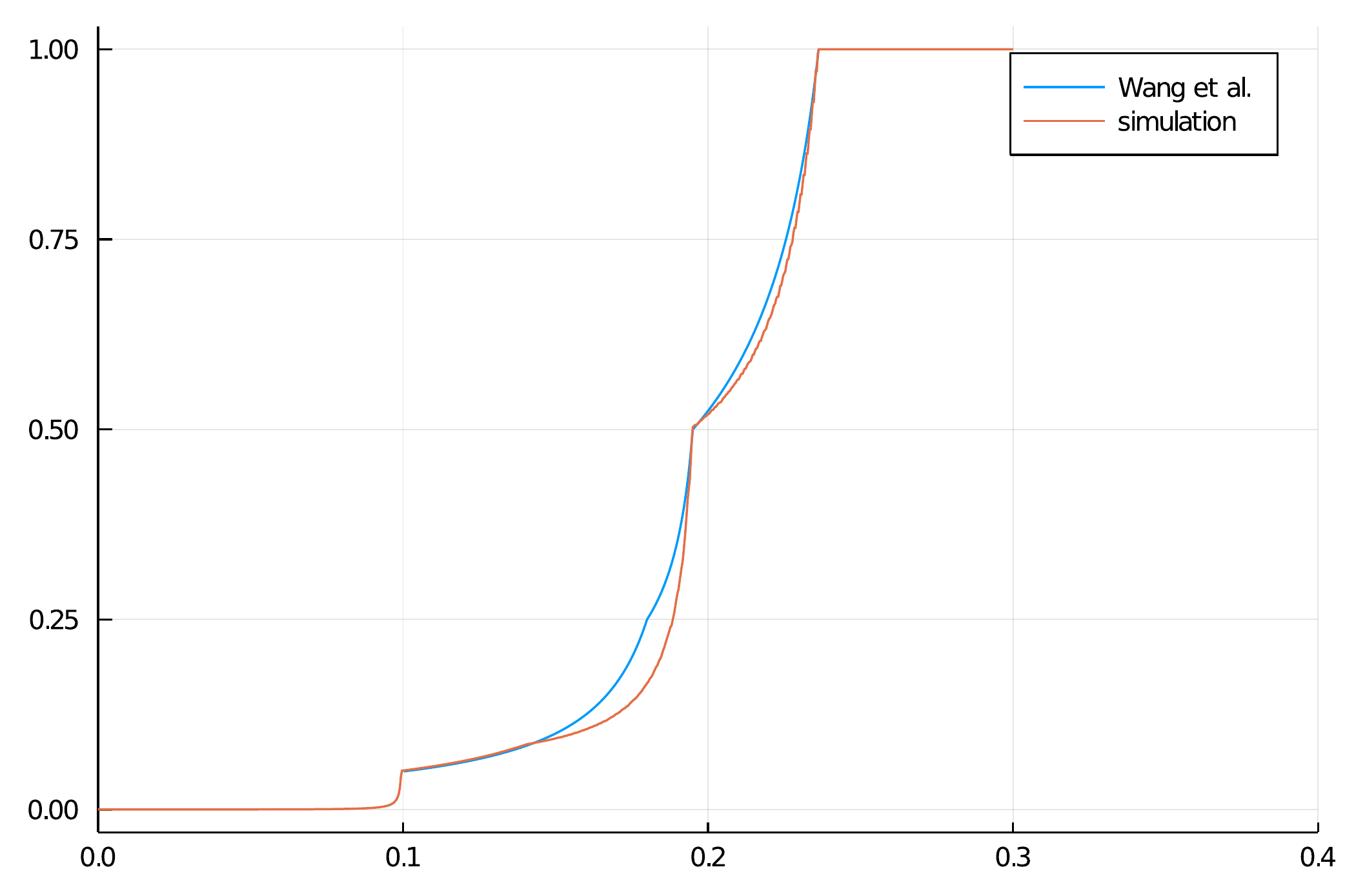}
         \caption{Strategy of player 2 and 3}
         \label{fig:three sin x}
     \end{subfigure}
             \caption{Result for Example~\ref{example:wang}}
        \label{fig:wang}
\end{figure}

\begin{example}[Another comparison with Wang et al.]
\label{example:wang2}
We use the library developed in~\cite{wang2020bayesian} to compare fictitious bidding with another example.
The setting is as follow.
There are 2 players, 2 agents for player 1, 3 for player 2. .
The potential values of player 1 are
 0.1 and 0.25 with respective probability 0.25 and 0.75, and the potential values of player 2 are 0.1, 0.2 and 0.25 with respective probability 0.05, 0.45 and 0.5.
The values are sampled independently.
The results are reported in Figure~\ref{fig:wang2} and the strategy profile we obtain is a $(9\times 10^{-4})$-Nash equilibrium.
\end{example}

\begin{figure}
     \centering
     \begin{subfigure}[b]{0.4\textwidth}
         \centering
         \includegraphics[width=\textwidth]{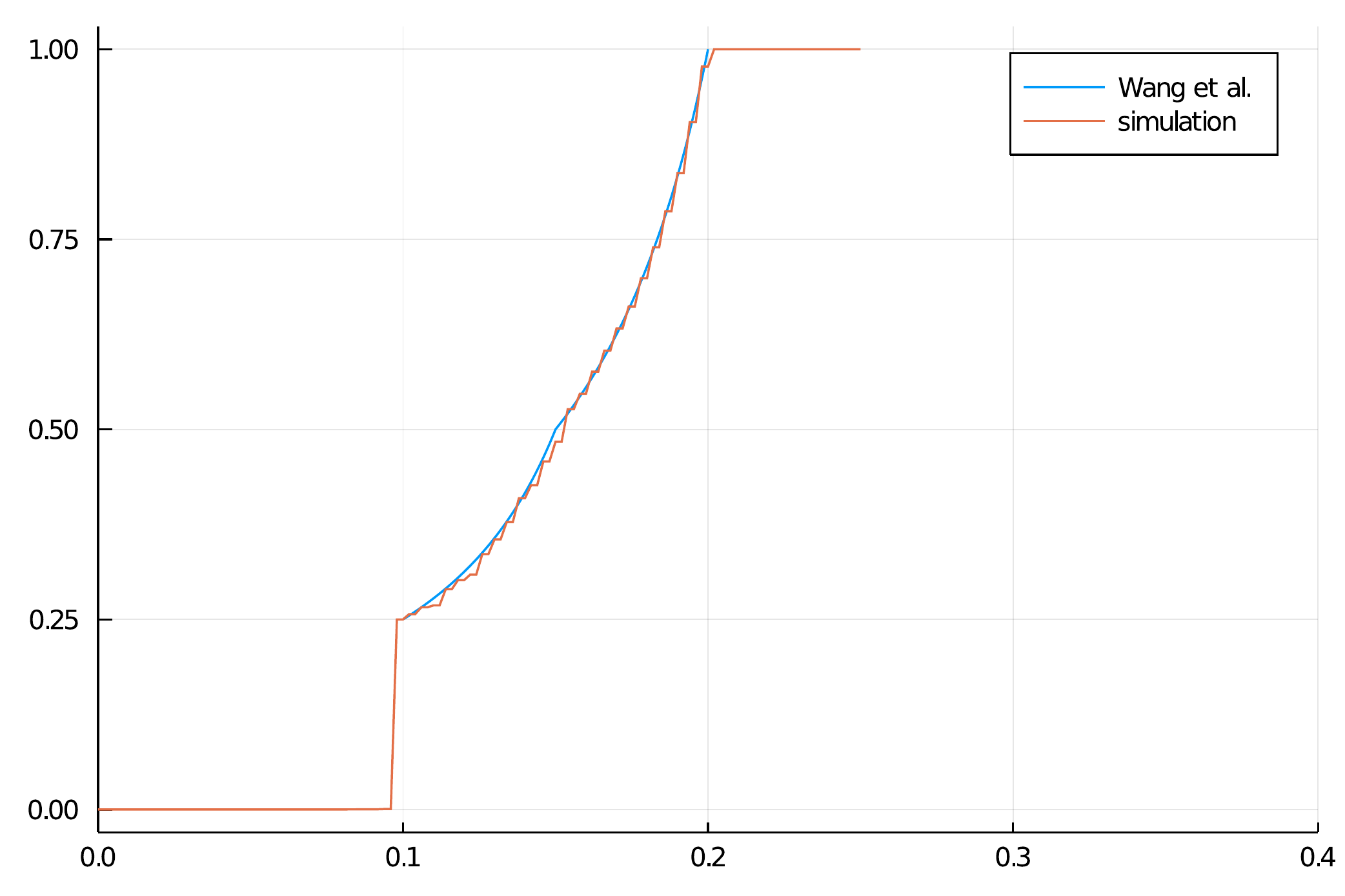}
         \caption{Strategy of player 1}
         \label{fig:y equals x}
     \end{subfigure}
     \hfill
     \begin{subfigure}[b]{0.4\textwidth}
         \centering
         \includegraphics[width=\textwidth]{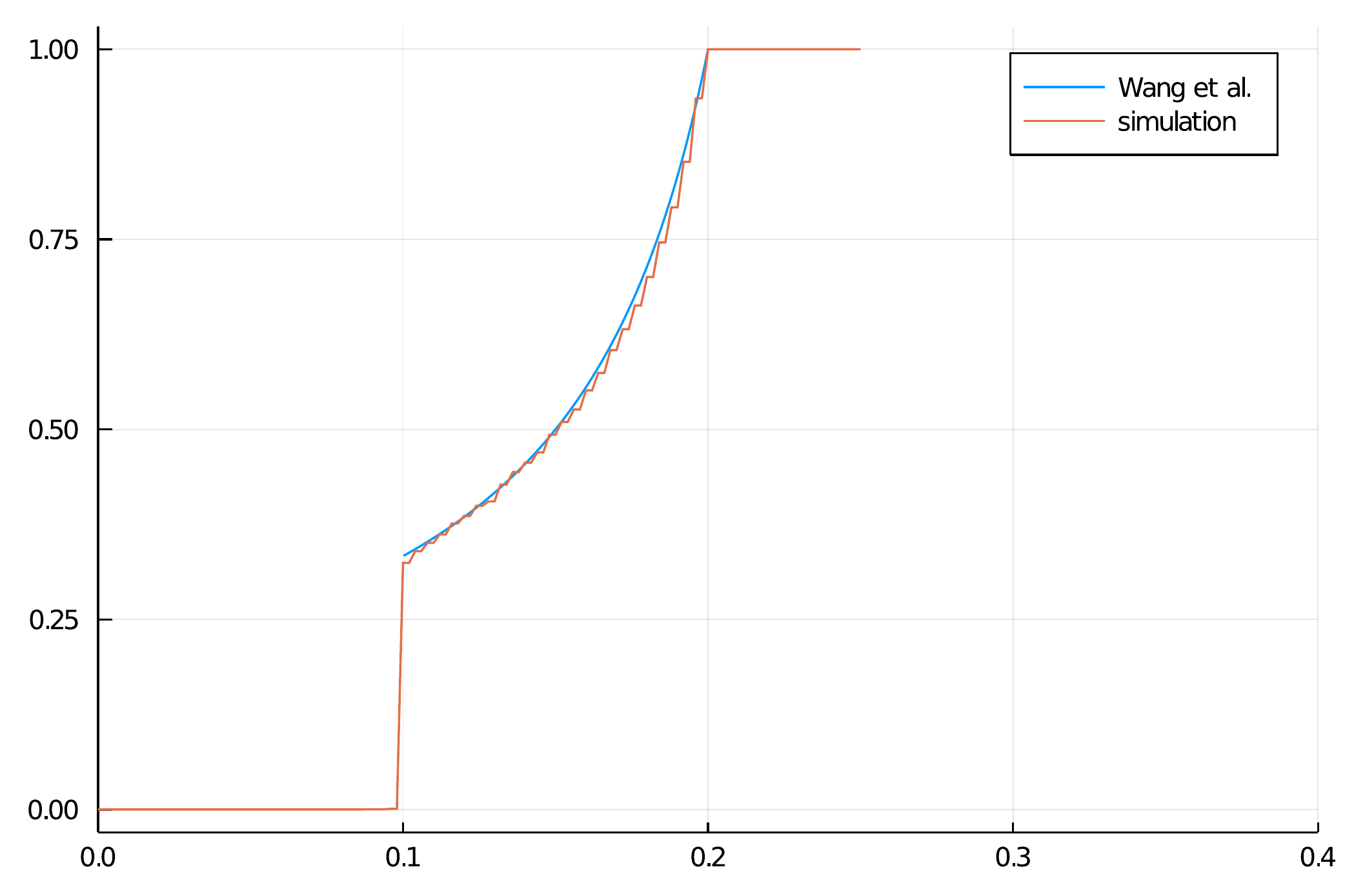}
         \caption{Strategy of player 2}
         \label{fig:three sin x}
     \end{subfigure}
             \caption{Result for Example~\ref{example:wang2}}
        \label{fig:wang2}
\end{figure}

\begin{example}
\label{exemple:batch}
In order to test the robustness of the fictitious bidding, we generated random instances of auction games. 
Each instance contains 10 agents whose values are samples uniformly between 0 and 1. 
We then generate 20 scenarios by taking ten 20 times at random a pair in $\{1\ldots 10\}$.
We took $\BB=[0,1/100,2/100,\ldots 1]$, $\eta_k =0.01/k$ and performed $10^6$ iterations.
The strategy profile we obtain is  $\epsilon$-Nash equilibrium, with $\epsilon$ respectively equals to  0.00048, 0.00083, 0.00139, 0.00058, 0.00402, 0.01374, 0.00501, 0.00305, 0.00326 and 0.00472.
\end{example}

As a side note, we observed during our experiments that the fictitious bidding heuristic seems to work well for other auctions, such as mixture of first and second-price (The item is still allocated to the highest bidder, but the winner now pays the average of the first and second highest bids).

\section{Conclusion}
We have introduced a fictitious-play like heuristic to compute the Bayes-Nash equilibrium of first-price auctions even when the values are correlated: fictitious bidding.
Compared to other methods in the literature, the proposed fictitious bidding is inherently discrete, as it is introduced for auctions with values and bids living in a discrete set.
As a result, our heuristic
does not rely on the traditional continuous approximation of the auction, which can be viewed in itself as an improvement, since applications usually take place in discrete environments. 
The heuristic displayed also good result for mixtures of first and second-price auctions.
Overall, our experimental results suggest there is a strong convergence theorem underneath;
clarifying under which conditions this is true (and why) is a central question for future research on the topic.

\footnotesize
\setlength{\bibsep}{\smallskipamount}
\bibliographystyle{elsarticle-num}
\bibliography{mybibfile,Bibliography-PM}

\end{document}